\newtheorem{theorem}{Theorem}
\newtheorem{corollary}[theorem]{Corollary}
\newtheorem{lemma}[theorem]{Lemma}
\newtheorem{proposition}[theorem]{Proposition}
\newtheorem{assumption}{Assumption}
\begin{document}

\author{Jean-Gabriel Lauzier\footnote{We would like to thank Nenad Kos, Massimo Marinacci and Ruodu Wang for their support and Mario Ghossoub, David Salib, Richard Peter and the participants of the 2021 \textit{American Risk and Insurance Association}'s conference for their comments. We acknowledge financial support from Bocconi University.} \\ University of Waterloo}

\title{Insurance design and arson-type risks}

\maketitle

\begin{abstract}
We design the insurance contract when the insurer faces arson-type risks. The optimal contract must be manipulation-proof. It is therefore continuous, it has a bounded slope, and it satisfies the no-sabotage condition when arson-type actions are free. Any contract that mixes a deductible, coinsurance and an upper limit is manipulation-proof. We also show that the ability to perform arson-type actions reduces the insured's welfare as less coverage is offered in equilibrium.
\end{abstract}

\noindent \textbf{Keywords:} Insurance design, Ex-post moral hazard, Arson-type risks, Discontinuous optimisation,  Positioning choice problems, No-sabotage condition, Monotonicity of optimal contracts, Comonotonic markets, Property \& Casualty insurance\\
\textbf{JEL classification:} D82, D86, G22

\break
\section{Introduction}
    Suppose Bob was just involved in a bicycle accident. After the fact, an officer of the law provided Bob with a certificate indicating that the automobile driver was responsible for the accident. However, the certificate does not specify how bad the damage inflicted to Bob's steel steed was. Which types of insurance contracts will incentivize Bob to take a sledgehammer to his bicycle before taking a picture and filling his insurance claim?\\

An arson-type action is the action of inflating an insurance claim by physically destroying an object without being caught. The ability of an insured to perform arson-type actions largely impacts the types of contract supplied in equilibrium. We show in this paper that the optimal contract never incentivizes the insured to perform arson-type actions, so that it is manipulation-proof. As such, the ability to perform arson-type actions hurts the insured because it reduces the amount of coverage offered in equilibrium. Manipulation-proofness requires the optimal contract to be continuous and to have a bounded slope, and our model implies the \textit{no-sabotage condition} of \citet{carlier2003pareto} when increasing the damage is free (Bob already has a mace). The types of contract most frequently observed in the real world are robust to arson-type risks. These include all contracts mixing deductible, coinsurance and an upper limit.\\

In other words, Bob will never want to pay for a protection that lets him destroy his bike and profit. Bob's protection is thus somewhat limited, to the extent that he would benefit from committing not to buy a sledgehammer. Of course, Bob is not credible, and he is only offered contracts with limited protection. Since all contracts mixing deductibles, coinsurance, and upper limits are manipulation-proof, we can assume as a rule of thumb that that's what Bob will purchase.\\

Our characterization of the optimal contract exploits a novel argument which is remarkably general. It relies on a rigorous characterization of the manipulation stage of the game induced by a given contract. Formally, we show that for every contract that induces arson-type actions, there exists an alternative contract that provides state-by-state the same protection and does not induce manipulations. Since all arson-type risks are priced in the cost of the contract with a higher premium, the latter contract is cheaper than the former and dominates it. The insurer only offers manipulation-proof contracts in equilibrium. We thus interpret our model as a behaviourally sound justification for assuming the no-sabotage condition. As such, our model suggests that all markets subject to arson-type risks are comonotonic markets \citep{boonen2020competitive}. We can interpret our manipulation-proofness result as a theoretical justification for the heuristic of using "simple" contracts, those contracts that only involve a combination of deductibles, coinsurance, and upper limits. We thus suggest that our model rationalize the common practice of restricting the insurance offer to simple contracts, at least in the field of Property \& Casualty insurance.\\

The rest is as follow. We first close this section with a review of the literature. The core of the article is split in two. We introduce in section 2.1 a general model of insurance design with arson-type risks and show that the optimal contract must be manipulation-proof. This yields a general representation of the optimal contract, which we then use in section 2.2 to find closed form solutions. We conclude by discussing the model's generality and its interest for the risk professionals.

\subsection*{Related literature}

We are not the first to point out that the possibility of inflating an insurance claim by physically destroying an object imposes structure on the type of contract which an insurer can offer. \citet{huberman1983optimal} contains the earliest mention of arson-type actions we are aware of. The authors analyze the optimal insurance contract when respecting the contract involves administrative costs but where there are economies of scale. Their model's optimal first-best contract is a completely disappearing deductible [Fig. 1(a)]. They show that if the insured can cause extra damage, then the second-best contract is a straight deductible [Fig. 1(b)]. Similarly, \citet{picard2000design} introduces arson-type actions to restrict the set of feasible contracts in a setting where the insured can defraud the contract and manipulate the audit costs. The model's first-best contract is discontinuous, and the author shows that this discontinuity disappears when there are arson-type risks.\\
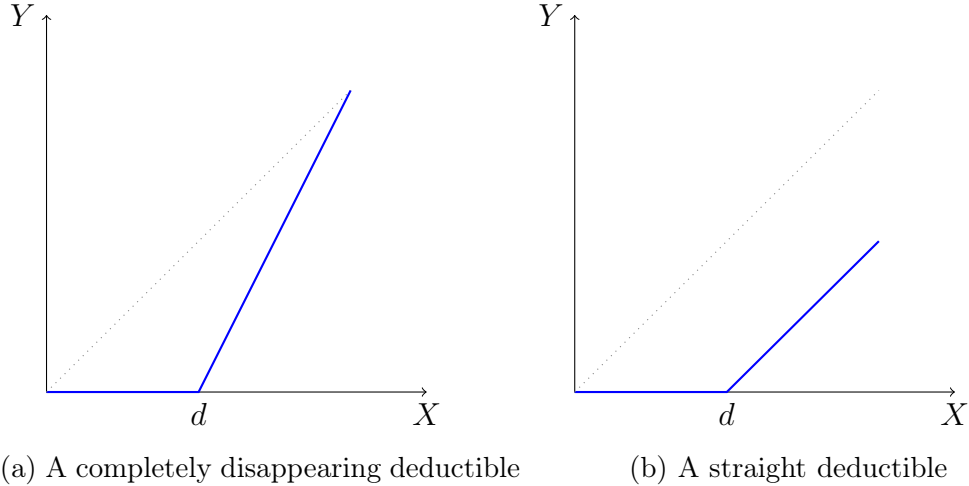
\begin{figure}[ht]
\caption{Disappearing v.s. straight deductible}
\begin{subfigure}{0.5\textwidth}
     \begin{tikzpicture}
        \draw[<->] (0,5) -- (0,0) -- (5,0);
        \draw[blue,thick](0,0) -- (2,0) -- (4,4); 
        \draw[gray,dotted] (0,0) -- (4,4);
        \node[below] at (5,0) {$X$};
        \node[left] at (0,5) {$Y$};
        \node[below] at (2,0) {$d$};
    \end{tikzpicture}  
    \caption{A completely disappearing deductible}
    \label{figure1:sub1}
\end{subfigure}
\begin{subfigure}{0.5\textwidth}
    \begin{tikzpicture}
        \draw[<->] (0,5) -- (0,0) -- (5,0);
        \draw[blue,thick](0,0) -- (2,0) -- (4,2); 
        \draw[gray,dotted] (0,0) -- (4,4);
        \node[below] at (5,0) {$X$};
        \node[left] at (0,5) {$Y$};
        \node[below] at (2,0) {$d$};
    \end{tikzpicture}  
    \caption{A straight deductible}
    \label{figure1:sub2}
\end{subfigure}
\end{figure}

Following the aforementioned articles it has become routine in the literature to assume that the retention schedule is monotonic. For instance,  \citet{cai2020optimal}'s survey contains an exhaustive overview of the use of this (co)monotonicity assumption in the reinsurance literature. This assumption is now referred to as the \textit\textit{no-sabotage condition} as of \citet{carlier2003pareto}. The name refers to the observation that non-monotonic retention schedules seem to incites the insured to inflate its losses. Our model provides a formal counterpart to such argument, thereby shedding lights on its veracity and limitations. We obtain the contract's continuity, its bounded slope and the no-sabotage condition as implications of the same result. Formally, since the contract can be discontinuous, as in \citep{picard2000design}, we cannot use standard first-order conditions to characterize the optimal manipulation correspondence. But the manipulation stage of the game is a positioning choice problem, a class of optimisation problems we defined and characterized in \citep{Lauzier2019positioningmaths}. The ad-hoc envelope theorem of \citet{Lauzier2019positioningmaths} thereby guarantees that the value function of the manipulation stage's problem is continuous and has a bounded slope. This articles main result follows from observing that any contract is dominated by the value function of the optimisation problem it defines in the manipulation stage of the game. The no-sabotage condition obtains as a special case, being the situation where there are no extra costs to inflating the losses. This articles thus complements the literature on the desirability of comonotonic contracts [\citep{landsberger1994co}, \citep{dana2003modelling}, \citep{LUDKOVSKI20081181}, \citep{CARLIER2012207}] by showing how comonotonicity also naturally obtains as the equilibrium outcome of a contracting game.\\

Many types of contracts satisfy the no-sabotage condition and are therefore manipulation-proof with regards to arson-type actions.\footnote{This includes full insurance contracts, straight deductibles and pure coinsurance contracts. See section 2.1.1.} Thus, if a contract is a first-best solution to a problem of insurance design without arson-type actions, it is also a second-best solution to the same problem with arson-types actions. In other words, the possibility of arson-type actions affects the design of insurance contracts if, and only if, the first-best contract is not itself manipulation-proof. Such departures routinely happen when the insurer faces administrative cost to respect the contract. We therefore cast our model in the context of \citet{spaeter1997design}, which contains a general analysis of the design of insurance contracts with administrative costs. This approach allows us to streamline proofs greatly and thus the exposition.\\

 While \citet{Lauzier2019securitydesign} obtained acceptable manipulations while designing securities, we cannot fathom them in the case of insurance contracts. In insurance, the possibility of arson-type manipulations simply hurts the insured by lowering the coverage offered by the insurer in equilibrium. Since all contracts mixing deductibles, coinsurance, and upper limits are robust to arson-type risks, there does not seem to be a trade-off between the prevention of arson-type manipulations and the provision of incentives to prevent ex-ante moral hazard.  This is exactly the opposite result as that obtained in \citet{Lauzier2019securitydesign}. We discuss further this observation in the conclusion.

\section{Model}
    We prove in Section 2.1 that the optimal contract must be manipulation-proof, meaning that the optimal contract never induces arson-type actions in equilibrium. Manipulation-proofness implies that the optimal contract is continuous and has a bounded slope, i.e. the optimal contract is a generalized deductible. We turn to closed-form solutions of the model in Section 2.2. We first show that arson-type risks are irrelevant when the administrative cost to deliver the contract is nil. We then study the case of fixed administrative costs to process claims. We show that the ability to use arson-type actions strictly hurts the insured. This is because, in equilibrium, the optimal insurance contract provides less coverage than it would otherwise. We conclude with an analysis of the second-best contract when the first-best is a completely disappearing deductible.

\subsection{Notation and manipulation-proofness}
Let $S$ be a set of states of the world, let $\mathbb{P}$ be a probability measure for states $s\in S$. Let the risk $X:S \rightarrow [0,M]$ be a continuous random variable with full support $[0,M]$ and some mass at
$0$, this mass being the probability that no accident occurs. 
Let the function $c:[0,M] \rightarrow \mathbb{R}_+$ be the administrative cost of
respecting the insurance contract and let $Y:[0,M]\rightarrow [0,M]$ be the
indemnity schedule.\footnote{So $Y\in
B_+(\mathcal{B}([0,M]))$, the space of non-negative and bounded functions (sup-norm) which
are measurable with regard to the Borel $\sigma$-algebra $\mathcal{B}$ of $[0,M]$.} The
amount $H\geq 0$ denotes the price (premium) of the contract while $W_0>0$ is the initial
wealth of the insured and $\rho \geq 0 $ is the loading factor.
As usual, the function $u:\mathbb{R}_+ \rightarrow \mathbb{R}_+ $ is a twice
differentiable and strictly concave Bernoulli utility function satisfying Inada
conditions.\\

\noindent The game proceeds as follows:

\begin{description}
    \item[Stage 1] the insured buys the insurance contract $Y$ at price $H$;
    \item[Stage 2] the state $s$ realizes and loss $X(s)$ occurs (Nature moves);
    \item[Stage 3] the insured observes the loss and decides to take hidden action
$z\in [0,M - X(s)]$ to augment the damages;
    \item[Stage 4] the contract is implemented without renegotiation.
\end{description}

\noindent The solution concept is a weak Perfect Bayesian equilibrium \citep{mas1995microeconomic} where we assume
that the insured takes the insurer's favoured action whenever indifferent.\\

By backward induction the optimal contract solves the following
optimisation program: 
\begin{align}
    \sup_{H \geq 0, Y \in B_+(\mathcal{B}([0,M]))} & \int u(W_0 - H - X(s)- z(s) -g(z(s))+Y(X(s)+z(s)))d\mathbb{P} \tag{Problem I} \label{Problem I}\\
      s.t.\, & \,0 \leq Y \tag{LL} \label{LL}\\
      &\, \forall s,\, Y(X(s)+z(s)) \leq X(s)+z(s) \tag{B} \label{B}\\
           & (1+\rho)\int Y(X(s)+z(s))+c(Y(X(s)+z(s))) d \mathbb{P} \leq H \tag{PC} \label{pci} \\
           & \forall s,\,  z(s) \in \arg\max_z\{ Y(X(s)+z)- z -g(z)\} \tag{IC} \label{ic}
\end{align}
where \eqref{LL} is the insured's limited liability constraint,
\eqref{B} is the "boundedness constraint" stating that the insurer will never
pay more than the observed loss, \eqref{pci} is the insurer's participation constraint,
\eqref{ic} is the insured's incentive compatibility constraint
and the function
\begin{align*}
    g(z)= \begin{cases} +\infty &\text{ if } z<0\\ \beta z &\text{ if } z\geq 0 \end{cases}
\end{align*}
represents an extra cost of inflicting damage (Bob buying a sledgehammer).\\

We aim to prove that the optimal contract is Lipschitz continuous with constant $\leq 1+\beta$. Let us start with a handy assumption which is common in the literature:
\begin{assumption} We assume throughout that feasible contracts are non-decreasing and upper semi-continuous functions. \end{assumption}
Of course, this implies that the optimal contract $Y$ is non-decreasing and upper semi-continuous (if it exists). Assumption 1 is standard and need not be discussed further. Similarly, the next statement is standard and will not be proved:
\begin{lemma} The insurer's participation constraint \eqref{pci} must be binding.\end{lemma}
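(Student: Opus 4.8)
The plan is to argue by contradiction in the usual way: suppose $(H^*, Y^*)$ is optimal but \eqref{pci} is slack, i.e.
\[
(1+\rho)\int Y^*(X(s)+z^*(s)) + c\bigl(Y^*(X(s)+z^*(s))\bigr)\, d\mathbb{P} < H^*,
\]
where $z^*$ is an associated equilibrium manipulation selection satisfying \eqref{ic}. Since the gap is strict, there is $\varepsilon>0$ such that $H^* - \varepsilon$ still satisfies \eqref{pci} with the \emph{same} contract $Y^*$ and the \emph{same} manipulation $z^*$. I would then check that lowering the premium to $H^*-\varepsilon$ changes none of the other constraints: \eqref{LL} and \eqref{B} do not involve $H$ at all, and \eqref{ic} is an arg-max over $z$ of $Y^*(X(s)+z)-z-g(z)$, an expression in which $H$ does not appear, so $z^*$ remains a valid best response. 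Hence $(H^*-\varepsilon, Y^*)$ is feasible.

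It remains to see that this strictly improves the objective. The integrand is $u\bigl(W_0 - H - X(s) - z(s) - g(z(s)) + Y(X(s)+z(s))\bigr)$, and since $u$ is strictly increasing (it is strictly concave with Inada conditions, hence $u'>0$), replacing $H^*$ by $H^*-\varepsilon$ strictly increases the argument of $u$ pointwise in every state, so the integral strictly increases. This contradicts optimality of $(H^*,Y^*)$, so \eqref{pci} must bind.

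I do not anticipate a genuine obstacle here; this is the standard "money is valuable, so don't overpay the premium" argument. The only point requiring a little care is that lowering $H$ must not perturb the equilibrium manipulation correspondence — but since $H$ enters the insured's Stage-3 payoff only additively and thus drops out of the arg-max in \eqref{ic}, the best-response set at Stage 3 is literally unchanged, and the tie-breaking convention (insured picks the insurer's favoured action when indifferent) also carries over verbatim. Hence the argument goes through with no additional hypotheses, which is why the paper declares it standard and omits it.
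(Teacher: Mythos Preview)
Your argument is correct and is precisely the standard ``overpaying is suboptimal'' reasoning the paper has in mind; the paper itself omits the proof entirely, declaring the statement standard, so there is nothing further to compare. The one minor point you leave implicit is that $H^*-\varepsilon\geq 0$ is automatic, since a slack \eqref{pci} forces $H^*> (1+\rho)\int[Y^*+c(Y^*)]\,d\mathbb{P}\geq 0$.
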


Lemma 1 simply states that the contract must be sold at actuarially fair price. Let us now inspect the incentive compatibility constraint \eqref{ic}. This constraint is an optimisation program which is ill-behaved. This is because we do not know at this level of generality if the optimal contract $Y$ is continuous. Moreover, we cannot assume away the possibility that $Y$ is discontinuous. We seem to be in trouble now, because we cannot differentiate the objective function in \eqref{ic} and thus cannot use standard first-order conditions to characterize the (set of) optimal manipulations. Fortunately, the optimisation problem of constraint \eqref{ic}
is what we defined as a positioning choice problem in \citet{Lauzier2019positioningmaths}. The interest of positioning choice problems relies in that their value function is always Lipschitz continuous and almost everywhere differentiable. The next ancillary lemma is an immediate consequence of the ad-hoc envelope theorem of \citet{Lauzier2019positioningmaths}.\\

Let the value function $V$ of the manipulation stage of the game be \begin{align*}
    V(s;Y)= Y(X(s)+z(s))- z(s) -g(z(s))
\end{align*}
for 
\begin{align*}
    z(s) \in \sigma(s;Y):=\arg\max_z\{ Y(X(s)+z)- z -g(z)\},
\end{align*}
where $\sigma$ denotes the optimal choice correspondence of the manipulation stage of the game. 
\begin{lemma}
 If $Y$ is non-decreasing and upper semi-continuous then $V$ is Lipschitz continuous with constant $\leq 1+\beta$ and almost everywhere differentiable.
\end{lemma}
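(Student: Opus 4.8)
\emph{Proof approach.} The plan is to rewrite the maximisation defining $V$ as a "positioning" problem and then read the Lipschitz estimate directly off monotonicity and upper semi-continuity of $Y$, never differentiating the objective. Write $x=X(s)$; since $g(z)=\beta z$ on the only relevant range $z\ge 0$, choosing $z\in[0,M-x]$ is the same as choosing a position $p=x+z\in[x,M]$, and
\begin{align*}
V(s;Y)=\max_{p\in[x,M]}\bigl\{Y(p)-(1+\beta)(p-x)\bigr\}=(1+\beta)x+\max_{p\in[x,M]}\bigl\{Y(p)-(1+\beta)p\bigr\}.
\end{align*}
Thus $V$ depends on $s$ only through $x=X(s)$, and I will write $V(x)$. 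First I would check that the maximum is attained, so that $\sigma(s;Y)\ne\varnothing$: the map $p\mapsto Y(p)-(1+\beta)p$ is upper semi-continuous (an upper semi-continuous function minus a continuous one) on the compact set $[x,M]$, so the Weierstrass theorem for upper semi-continuous functions applies.

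Next I would establish, for arbitrary $0\le x_1<x_2\le M$, the two-sided bound $0\le V(x_2)-V(x_1)\le(1+\beta)(x_2-x_1)$. For the upper bound, take any maximiser $p_2^{*}\in[x_2,M]$ for $V(x_2)$; since $p_2^{*}\ge x_2>x_1$ it is feasible for the $x_1$-problem, so $V(x_1)\ge Y(p_2^{*})-(1+\beta)(p_2^{*}-x_1)=V(x_2)-(1+\beta)(x_2-x_1)$. For the lower bound, take a maximiser $p_1^{*}\in[x_1,M]$ for $V(x_1)$. If $p_1^{*}\ge x_2$ it is feasible for the $x_2$-problem and $V(x_2)\ge Y(p_1^{*})-(1+\beta)(p_1^{*}-x_2)=V(x_1)+(1+\beta)(x_2-x_1)\ge V(x_1)$; if $p_1^{*}<x_2$, then using feasibility of $p=x_2$ for the $x_2$-problem together with $p_1^{*}\ge x_1$ and the monotonicity of $Y$, $V(x_2)\ge Y(x_2)\ge Y(p_1^{*})\ge Y(p_1^{*})-(1+\beta)(p_1^{*}-x_1)=V(x_1)$. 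Hence $V$ is non-decreasing and Lipschitz on $[0,M]$ with constant $\le 1+\beta$.

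Finally, a real-valued monotone Lipschitz function on an interval is absolutely continuous, hence differentiable Lebesgue-almost everywhere on $[0,M]$ by Lebesgue's differentiation theorem, which gives the last claim (the natural reading of "almost everywhere", since $X$ has full support on $[0,M]$). I would also remark that the entire statement is an instance of the ad-hoc envelope theorem for positioning choice problems of \citet{Lauzier2019positioningmaths}, from which both the constant $1+\beta$ and the a.e. differentiability follow immediately; but the short self-contained argument above seems preferable for the exposition.

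The only genuinely delicate point is the lower bound $V(x_2)\ge V(x_1)$: this is exactly where Assumption 1 (monotonicity of $Y$) is needed, namely in the case where no manipulation, or too small a manipulation, is optimal at the smaller loss $x_1$ — without monotonicity one cannot exclude $V$ jumping downward. Attainment of the maximum (non-emptiness of $\sigma$) is the other place where the hypotheses are used, via upper semi-continuity of $Y$. Everything else is bookkeeping with the inclusion $[x_2,M]\subseteq[x_1,M]$.
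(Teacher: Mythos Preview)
Your argument is correct. The paper does not actually spell out a proof of this lemma: it simply declares it ``an immediate consequence of the ad-hoc envelope theorem'' of \citet{Lauzier2019positioningmaths} and moves on, later invoking Rademacher's theorem in a footnote for the a.e.\ differentiability. You take a more elementary, self-contained route: the change of variable $p=x+z$ turns the manipulation problem into a one-parameter positioning problem, and then the two-sided estimate $0\le V(x_2)-V(x_1)\le(1+\beta)(x_2-x_1)$ is obtained by the standard ``swap the maximiser'' trick together with monotonicity of $Y$ for the lower bound. This has the advantage that a reader need not consult the companion paper, and it makes transparent exactly where each hypothesis (upper semi-continuity for attainment, monotonicity for the lower bound) is used --- information the paper only hints at. The cost is that your argument is specific to this affine-cost structure, whereas the cited envelope theorem presumably covers a broader class of positioning problems in one stroke. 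Your closing remark acknowledging the envelope-theorem route is appropriate and matches the paper's own stance.
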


We are now ready to prove our main result. We say that a contract $Y$ is manipulation-proof if for every $s \in S$ it is $$0 \in \sigma(s;Y).$$
\begin{theorem} Any optimal contract $Y$ is manipulation-proof.\end{theorem}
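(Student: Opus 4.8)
The plan is to argue by contradiction: suppose an optimal contract $Y^\star$ is not manipulation-proof, so there is a positive-probability set of states $s$ at which $0 \notin \sigma(s;Y^\star)$, i.e.\ the insured strictly prefers to inflate the loss. I would then construct a competitor contract $\widetilde Y$ defined essentially as the value function of the manipulation stage, and show that $\widetilde Y$ delivers the same state-by-state net position to the insured while being cheaper, contradicting optimality of $(Y^\star,H^\star)$.

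Concretely, given $Y^\star$ define $\widetilde Y$ on $[0,M]$ so that $\widetilde Y(X(s)) = V(s;Y^\star)$ for every $s$; since $X$ has full support this pins down $\widetilde Y$ on $[0,M]$ (and one checks it is well defined because $V$ depends on $s$ only through $X(s)$). By Lemma 2, $V(\cdot;Y^\star)$ — hence $\widetilde Y$ — is Lipschitz with constant $\le 1+\beta$; in particular it is continuous and non-decreasing (monotonicity of $\widetilde Y$ follows because $\sigma$ and the objective in \eqref{ic} are monotone in the loss), so $\widetilde Y$ is a feasible contract in the sense of Assumption 1. One must also verify $\widetilde Y$ satisfies \eqref{LL} and \eqref{B}: non-negativity is immediate from feasibility of $z=0$ in the inner maximization, and $\widetilde Y(X(s)) = V(s;Y^\star) = Y^\star(X(s)+z(s)) - z(s) - g(z(s)) \le X(s)+z(s) - z(s) = X(s)$ using \eqref{B} for $Y^\star$ together with $g(z)\ge 0$.

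Next I would show $\widetilde Y$ is cheap enough to beat $Y^\star$. Under $\widetilde Y$ the insured has no incentive to manipulate: for any $s$, $\widetilde Y(X(s)+z) - z - g(z) \le V$-type bound $\le \widetilde Y(X(s))$ because $\widetilde Y$ is $(1+\beta)$-Lipschitz and $g(z)=\beta z$ for $z\ge 0$, so taking $z=0$ is optimal and $0\in\sigma(s;\widetilde Y)$. Hence under $\widetilde Y$ the realized indemnity in state $s$ is exactly $\widetilde Y(X(s)) = V(s;Y^\star)$, which is the insured's net-of-manipulation-cost payoff under $Y^\star$; so the insured's wealth in every state is at least as high under $\widetilde Y$ at the \emph{same} premium — actually the key point is the insurer's cost. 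The insurer's actuarial cost of $\widetilde Y$ is $(1+\rho)\int [\widetilde Y(X(s)) + c(\widetilde Y(X(s)))]\,d\mathbb P$, and since $\widetilde Y(X(s)) = Y^\star(X(s)+z(s)) - z(s) - g(z(s)) \le Y^\star(X(s)+z(s))$ pointwise (as $z(s)\ge 0$, $g\ge 0$), monotonicity of $c$ and of the integral gives that this cost is no larger than the cost of $Y^\star$, with strict inequality on the positive-probability set where $z(s)>0$. So $\widetilde Y$ can be offered at a strictly lower premium $\widetilde H < H^\star$ while still meeting \eqref{pci}, and it leaves the insured with weakly more wealth in every state (strictly more where manipulation occurred), hence strictly higher expected utility — contradicting optimality of $(Y^\star,H^\star)$. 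If indifferent-tie-breaking is invoked, one notes the insurer strictly prefers $\widetilde Y$, which also rules out $Y^\star$.

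The main obstacle I anticipate is not the economic comparison but the measure-theoretic and well-definedness bookkeeping: checking that $\widetilde Y(x) := V(X^{-1}(x);Y^\star)$ is genuinely a function of $x$ alone (i.e.\ that $V(s;Y^\star)$ is constant on level sets of $X$, which holds because the inner problem in \eqref{ic} only sees $X(s)$), that $\widetilde Y$ is Borel-measurable and upper semi-continuous, and that the Lipschitz bound from Lemma 2 transfers cleanly to $\widetilde Y$ as a function on $[0,M]$ rather than on $S$. A secondary subtlety is confirming that $0\in\sigma(s;\widetilde Y)$ for \emph{all} $s$ — i.e.\ that applying the value-function construction once already yields a manipulation-proof contract, which is exactly where the $(1+\beta)$-Lipschitz property of $\widetilde Y$ combined with the linear cost $g(z)=\beta z$ does the work: an upward shift of the argument by $z$ raises $\widetilde Y$ by at most $(1+\beta)z$, wait—by at most $z$ beyond the $\beta z$ already—so $z=0$ is optimal. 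Once these checks are in place the contradiction is immediate.
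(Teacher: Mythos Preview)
Your proposal is correct and follows essentially the same route as the paper: assume a positive-probability set where $0\notin\sigma(s;Y^\star)$, replace $Y^\star$ by the value function $\widetilde Y=V(\cdot;Y^\star)$, use Lemma~2 to get the $(1+\beta)$-Lipschitz bound which makes $\widetilde Y$ manipulation-proof, and conclude that the insurer's cost---hence the premium---strictly drops while the insured's state-by-state payoff is unchanged. You are in fact more careful than the paper in checking \eqref{LL}, \eqref{B}, well-definedness of $\widetilde Y$ as a function of $x$, and the role of monotonicity of $c$ in the cost comparison; the only cosmetic slip is that the wealth improvement is strict in \emph{every} state (via the premium drop), not just where manipulation occurred.
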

\begin{proof}
    Suppose, by contraposition, that $Y$ is optimal but that there exist a set $S'\subset S$ of such that simultaneously $\mathbb{P}[s\in S']>0$ and for every $s\in S'$ it is 
$$0\notin \sigma(s;Y).$$
Let $V(s;Y)$ denote the value function of the manipulation problem defined by $Y$, and consider now the alternative contract $(\overline{H}, \overline{Y})$ where $\overline{Y}=V$, i.e. the new indemnity schedule gives state-by-state the same final reimbursement than the original one after manipulations. The new indemnity schedule $\overline{Y}$ is manipulation-proof: by Assumption 1 and Lemma 2 it holds that $\overline{Y}$ is non-decreasing and Lipschitz with constant $1+\beta$, and thus to any two $x,x'\in[0,M]$ such that $x<x'$ it is 
$$\overline{Y}(x) \geq \overline{Y}(x') - g(x'-x),$$
and for every realisation $x\in [0,M]$ we have
$$0 \in \sigma(x;\overline{Y}):=\arg\max_z\{\overline{Y}(x+z) -g(z) \}.$$
The new contract strictly dominates the original contract. Indeed, notice that 
\begin{align*}
    \int Y(X(s)+z(s))+c(Y(X(s)+z(s))) d \mathbb{P} > \int \overline{Y}(X(s)) + c(\overline{Y}(X(s))d\mathbb{P}
\end{align*}
so the price $\overline{H}$ of $\overline{Y}$ is strictly smaller than the price $H$ of the original contract.
\end{proof}

We can understand Theorem 3 as stating that since the insurer fully prices arson-type risks, contracts which induce arson-type actions will never be offered in equilibrium. Indeed, all extra damage due to arson-type actions must lead to a higher premium for the contract to be actuarially fair. Of course, the insured will prefer the cheapest contract as he receives state-by-state the same final amount under both contracts. Or, from Bob's perspective, he was offered two contracts offering the same protection. An expensive one which would allow him to take a sledgehammer to his bike and a cheaper one which did not, so Bob, being a rational person, chose the cheaper option.
\begin{corollary} Any optimal contract $Y$ must be Lipschitz and with slope $\leq 1+\beta$.\end{corollary}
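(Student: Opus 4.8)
The plan is to deduce Corollary 4 directly from Theorem 3 together with Lemma 2, bypassing any new optimisation argument. The key observation is that Theorem 3 guarantees that an optimal contract $Y$ satisfies $0\in\sigma(s;Y)$ for every $s$, i.e. no manipulation is ever strictly profitable. First I would translate this manipulation-proofness condition into a pointwise inequality on $Y$: for every realisation $x\in[0,M]$ and every admissible increment $z\in[0,M-x]$, optimality of $z=0$ in $\arg\max_z\{Y(x+z)-z-g(z)\}$ yields $Y(x)\geq Y(x+z)-z-g(z)=Y(x+z)-(1+\beta)z$ (using $g(z)=\beta z$ for $z\geq 0$). Equivalently, for any $x<x'$ in $[0,M]$, setting $z=x'-x$ gives $Y(x')-Y(x)\leq (1+\beta)(x'-x)$.

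Next I would combine this upper bound on the increments with Assumption 1, which gives $Y$ non-decreasing, hence $Y(x')-Y(x)\geq 0$ for $x<x'$. The two bounds together yield $0\leq Y(x')-Y(x)\leq (1+\beta)(x'-x)$, i.e. $|Y(x')-Y(x)|\leq (1+\beta)|x'-x|$ for all $x,x'\in[0,M]$. This is precisely the statement that $Y$ is Lipschitz with constant $\leq 1+\beta$; in particular $Y$ is continuous, and wherever it is differentiable (which is almost everywhere, by Rademacher or by Lemma 2) its slope lies in $[0,1+\beta]$. As a sanity check, one can also note that this is consistent with the construction in the proof of Theorem 3, where the dominating contract was taken to be $\overline Y=V$, which Lemma 2 already certifies to be Lipschitz with constant $\leq 1+\beta$.

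I do not expect a genuine obstacle here, since the corollary is essentially a restatement of the manipulation-proofness condition; the only care needed is the bookkeeping step of extracting the two-sided Lipschitz bound from the one-sided incentive inequality plus monotonicity, and making sure the increment $z=x'-x$ is admissible, which it is because $x'\leq M$ forces $x'-x\leq M-x$. If one wanted to be slightly more careful about the phrase ``slope $\leq 1+\beta$'' for a merely Lipschitz (not everywhere differentiable) function, I would invoke Lemma 2 or Rademacher's theorem to note that $Y'$ exists a.e.\ and satisfies $0\le Y'\le 1+\beta$ at every point of differentiability, which is the standard meaning of a bounded-slope generalized deductible in this literature.
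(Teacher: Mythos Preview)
Your proposal is correct and follows essentially the same route the paper intends. The paper does not supply a separate proof of Corollary 4; it treats the statement as immediate from Theorem 3, and the argument you spell out---translating $0\in\sigma(s;Y)$ into $Y(x')-Y(x)\le(1+\beta)(x'-x)$ and combining with monotonicity from Assumption 1---is precisely the unpacking the reader is expected to perform (the same inequality appears verbatim inside the proof of Theorem 3 for $\overline{Y}$).
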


\subsubsection{Characterization}
Corollary 4 implies that the family of contracts
$$\{Y\in B_+(\mathcal{B}([0,M])):Y(s)=V(s)\}$$
consists of functions which are a.e. differentiable.\footnote{This follows from Rademacher's Theorem. See Theorem 3 of \citet{Lauzier2019positioningmaths}.} We can thus rewrite \eqref{Problem I} as
\begin{align}
    \max_{H \geq 0, Y \in B_+(\mathcal{B}([0,M]))} & \int u(W_0 - H - X(s) +Y(X(s))d\mathbb{P} \tag{Problem S} \label{Problem S}\\
      s.t.\, & \,0 \leq Y \leq X \tag{S1} \label{S1}\\
            & slope(Y)\leq 1+ \beta \tag{S2} \label{S2}\\
           & (1+\rho)\int Y(X(s))+c(Y(X(s))) d \mathbb{P} = H \tag{S3}\label{S3}
\end{align}
under the implicit assumption that $Y\in C^0[0,M]$ is a.e. differentiable.\footnote{The notation $C^0$ denotes the space of
continuous functions.} Notice how this problem, as rewritten, is almost identical to the problem studied in \citet{spaeter1997design}, except for the extra constraint $slope(Y)\leq 1+\beta$.\\

We conclude this section with a general representation theorem and a few observations that are handy when deriving closed form solutions. As a consequence of Theorem 3, all optimal contract $Y$ can be written as
\begin{align*}
    Y(x)=\max\{0, \alpha(x)x-d\},
\end{align*}
where $d\geq 0$ is a deductible and $\alpha(x)$ is a non-negative, continuous and a.e. differentiable function satisfying for every $x\in [0,M]$
\begin{align*}
    0 \leq \frac{\partial \alpha(x)x}{\partial x}\leq 1+\beta.
\end{align*}
We say that contract $Y$ 
\begin{itemize}
    \item is a \textbf{full insurance contract }when $d=0$ and $\alpha(x)=1$ everywhere;
    \item is a \textbf{straight deductible} when $d>0$ and $\alpha(x)=1$ everywhere;
    \item entails \textbf{coinsurance} when $\alpha(x)=\alpha \in (0,1)$ and
    \item has \textbf{upper limit} $\delta>0$ if for every $x$, $Y(x)\leq \delta$, with strict equality for some $x$.
\end{itemize}
Any contract mixing deductibles, coinsurance and upper limits can be written as
\begin{align*}
    Y(x)=\min\{\delta, \max\{0, \alpha x - d\}\}.
\end{align*}

When $\beta=0$ we recover the well-known \textit{no-sabotage condition} explained in details in \citet{carlier2003pareto}. Formally, if $\beta=0$ then for any contract $Y$ the following holds:
\begin{enumerate}
    \item $slope(Y) \leq 1$;
    \item the retention function $R(x)=x-Y(x)$ is weakly monotone increasing;
    \item the random vector $(X,Y, X-Y)$ is comonotonic.
\end{enumerate}

\subsection{Some closed form solutions}
We now provide closed form solutions to our contracting problem.
\subsubsection{Full insurance, straight deductibles and the irrelevance of arson}
\begin{lemma}[Arrow's Theorem]
If $c=0$ the optimal contract is a straight deductible $Y(x)=\max\{0,x-d\}$ for which $d=0$ if, and only if, $\rho=0$.
\end{lemma}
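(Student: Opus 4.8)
The plan is to ride on the reduction already in hand: by Lemma 1 the participation constraint binds, and by Corollary 4 every candidate optimum solves \eqref{Problem S}, which with $c=0$ is $\max \int u\bigl(W_0 - H - X(s) + Y(X(s))\bigr)\,d\mathbb{P}$ subject to $0\le Y\le X$, $slope(Y)\le 1+\beta$ and $H=(1+\rho)\int Y(X(s))\,d\mathbb{P}$. Since $H$ is pinned down by $\pi:=\int Y(X)\,d\mathbb{P}$, I would decompose the maximization: first fix $\pi$ and maximize $\int u\bigl(W_0-(1+\rho)\pi - X + Y(X)\bigr)\,d\mathbb{P}$ over feasible $Y$ with $\int Y(X)\,d\mathbb{P}=\pi$ (the term $-(1+\rho)\pi$ being now a constant), and then optimize over $\pi$.

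For the inner problem I would run the standard pointwise-Lagrangian argument. The objective is concave in $Y$ and the feasible set is convex with a Slater point (e.g. $Y(x)=x/2$), so there is a multiplier $\mu\ge 0$ such that an optimal $Y$ maximizes, for each $x\in[0,M]$, the map $Y(x)\mapsto u\bigl(W_0-(1+\rho)\pi - x + Y(x)\bigr)-\mu Y(x)$ over $Y(x)\in[0,x]$. Since $u$ is strictly concave with $u'$ a decreasing bijection (Inada), the unconstrained first-order condition forces $W_0-(1+\rho)\pi - x + Y(x)$ to equal a constant $w^{*}=(u')^{-1}(\mu)$ wherever the $[0,x]$ bounds are slack; writing $d:=W_0-(1+\rho)\pi - w^{*}$ this reads $Y(x)=x-d$, and projecting onto $[0,x]$ gives $Y(x)=\max\{0,x-d\}$ with $d\ge 0$. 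The decisive point is that this contract has slope $0$ or $1$, hence $\le 1+\beta$: it solves the relaxation of \eqref{Problem S} obtained by dropping the constraint $slope(Y)\le 1+\beta$, so that constraint is slack and the deductible is optimal for the given $\pi$. (An alternative route avoids the Lagrangian: the retained loss $\min\{X,d\}$ of a deductible is smaller in the convex order than $X-Y(X)$ for any feasible $Y$ of the same mean — a single-crossing/Ohlin argument — whence the conclusion by Rothschild--Stiglitz; I expect the Lagrangian version to be the cleaner write-up.)

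It remains to optimize over $d\ge 0$, i.e. to show $d=0$ is optimal if and only if $\rho=0$. Write $\pi(d)=\int (X-d)^+\,d\mathbb{P}$ and $U(d)=\int u\bigl(W_0-(1+\rho)\pi(d)-\min\{X,d\}\bigr)\,d\mathbb{P}$. If $\rho=0$, then $d=0$ delivers the deterministic final wealth $W_0-\mathbb{E}[X]$, whereas any $d\in(0,M)$ delivers a non-degenerate final wealth with the same mean $W_0-\mathbb{E}[X]$ (because $\mathbb{E}[\min\{X,d\}]+\pi(d)=\mathbb{E}[X]$); strict concavity of $u$ then gives $U(d)<U(0)$, so $d=0$. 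If $\rho>0$, I would differentiate at $d=0$: using $\pi'(d)=-\mathbb{P}[X>d]$ and $\partial_d\min\{X,d\}=\mathbf{1}\{X>d\}$ (legitimate since $X$ is a continuous random variable, so the kink at $X=d$ is null), and noting that at $d=0$ the final wealth is the constant $W_0-(1+\rho)\mathbb{E}[X]$, one obtains $U'(0)=u'\bigl(W_0-(1+\rho)\mathbb{E}[X]\bigr)\,\rho\,\mathbb{P}[X>0]>0$, where $\mathbb{P}[X>0]>0$ because $X$ has full support on $[0,M]$. Hence $d=0$ is not optimal, so any optimal deductible has $d>0$; combining the two directions yields the stated equivalence.

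The main obstacle I anticipate is purely technical: making the pointwise-Lagrangian step rigorous in this infinite-dimensional, possibly atomic setting — existence of the multiplier via strong duality, measurable selection of the pointwise maximizer, and the boundary bookkeeping for $\mu=0$ and for $d\le 0$ (which collapses to full insurance). The economic substance is just Arrow's classical deductible theorem; the only genuinely new ingredient is that Theorem 3 and Corollary 4 entitle us to work with \eqref{Problem S} in the first place, together with the remark that a deductible automatically satisfies $slope(Y)\le 1+\beta$, so arson-proofing imposes nothing here.
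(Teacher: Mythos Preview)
Your proposal is correct. Note, however, that the paper does not actually prove this lemma: it is stated as the classical Arrow--Borch--Raviv theorem and the reader is referred to \citet{dionne2000handbook}. So there is no ``paper's own proof'' to compare against; the authors treat it as received wisdom.

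What your write-up adds is a self-contained argument tailored to the paper's setting: you explicitly invoke Corollary 4 to justify working with \eqref{Problem S}, run the pointwise-Lagrangian/convex-order argument to obtain the deductible shape, and then check the equivalence $d=0\Leftrightarrow\rho=0$ via a direct derivative at $d=0$. The one genuinely useful remark you make beyond the classical theorem is that the deductible $Y(x)=\max\{0,x-d\}$ has slope in $\{0,1\}$, so the arson-proofing constraint \eqref{S2} is automatically slack --- this is exactly the point the paper is driving at in the paragraph following the lemma (``the first-best contract never incentivizes Bob to take a sledgehammer to his bike''), and you make it precise. Your caveats at the end (strong duality, measurable selection, the $\mu=0$ boundary) are the right ones to flag, but for a lemma the paper itself leaves unproved they are more than sufficient.
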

Lemma 5 need not be proved as it is the classic Arrow-Borch-Raviv Theorem (see \citet{dionne2000handbook}). The lemma tells us that \eqref{Problem S} becomes interesting only when $c>0$ somewhere. This is because when $c=0$ the first-best contract never incentivizes Bob to take a sledgehammer to his bike. In other words, arson-type risks impact the provision of insurance only when there is a meaningful reason not to provide either full insurance or a straight deductible.

\subsubsection{Fixed costs and nuisance claims: arson-type risks reduce welfare}
Let us now consider the case when the administrative costs to deliver the contract are fixed. This situation is important because it is the easiest way to show that arson-type risks hurt the insured, i.e. that Bob would prefer to be unable to destroy his bike. 
\begin{assumption}
The cost function involves only a fixed cost per claim: $c$ satisfies $c(0)=0$ and $c(y)=c_0>0$ for every claim requiring $y>0$ to be paid by the insurer.
\end{assumption}
We claim that when the insured can freely augment the damage ($\beta =0$) then the optimal contract is a straight deductible.
\begin{proposition}
Under assumption 2, if $\beta =0$ then the solution to \eqref{Problem S} is a straight deductible: there exist a $d>0$ such that
$$Y(x)= \max\{0, x-d\}.$$
\end{proposition}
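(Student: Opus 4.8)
The plan is to combine Corollary~4 with a classical Arrow--Borch--Raviv argument adapted to the fixed claim cost. By Corollary~4 any optimal $Y$ is continuous, non-decreasing and of slope at most $1+\beta=1$, so it solves \eqref{Problem S}. Under Assumption~2 the cost term in \eqref{S3} equals $c_0\,\mathbb{P}[Y(X)>0]$; since a continuous, non-decreasing $Y$ with $Y(0)=0$ vanishes exactly on an initial interval, write $d:=\sup\{x:Y(x)=0\}$, so that $\{Y>0\}=(d,M]$, $\mathbb{P}[Y(X)>0]=\mathbb{P}[X>d]$, and, using $slope(Y)\le 1$ and $Y(d)=0$, $Y(x)\le (x-d)^{+}$ pointwise. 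The goal is to upgrade this bound to an equality for the optimum and then to show $d>0$. (One may alternatively observe that, for $\beta=0$, \eqref{Problem S} is the problem of \citet{spaeter1997design} with the \emph{additional} constraint $slope(Y)\le 1$, and that the straight deductible solving Spaeter's fixed-cost problem has slope $\le 1$ and is therefore still feasible and optimal here; I nonetheless sketch a self-contained argument.)

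I would argue by contradiction. Let $Y^{*}$ be optimal with deductible point $d^{*}$, premium $H^{*}$ and retained loss $R^{*}=X-Y^{*}$, and suppose $Y^{*}\not\equiv(x-d^{*})^{+}$. Because the fixed cost couples coverage and price, the right object to compare $Y^{*}$ with is not the straight deductible of equal expected indemnity but the straight deductible $Y_{\bar d}(x)=(x-\bar d)^{+}$ of equal \emph{premium}: the map $t\mapsto(1+\rho)\big(\mathbb{E}[(X-t)^{+}]+c_0\,\mathbb{P}[X>t]\big)$ is continuous and strictly decreasing on $[0,M)$, so a unique such $\bar d$ exists, and $Y^{*}\le(x-d^{*})^{+}$ forces $\bar d\ge d^{*}$, with equality only when $Y^{*}\equiv(x-d^{*})^{+}$, which is excluded. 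Both contracts leave the insured with wealth $W_0-H^{*}-R(\cdot)$ for their respective retentions $R^{*}$ and $R_{\bar d}=\min(X,\bar d)$, so it remains to compare $\mathbb{E}[u(W_0-H^{*}-R_{\bar d}(X))]$ with $\mathbb{E}[u(W_0-H^{*}-R^{*}(X))]$.

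The crux is a single-crossing estimate for $\Delta:=R_{\bar d}-R^{*}$. On $[0,d^{*}]$ both retentions equal $x$, so $\Delta=0$; on $(d^{*},\bar d]$ one has $R_{\bar d}(x)=x\ge R^{*}(x)$, so $\Delta\ge 0$; and on $(\bar d,M]$, $\Delta(x)=\bar d-R^{*}(x)$ is non-increasing since $R^{*}$ is non-decreasing. Hence $\Delta$ has a single sign change, from $+$ to $-$, at some $x^{*}\in[\bar d,M]$, while $\mathbb{E}[\Delta(X)]=-c_0\,\mathbb{P}[d^{*}<X\le\bar d]\le 0$. Writing $w_1=W_0-H^{*}-R_{\bar d}(X)$ and using concavity, $u(w_1+\Delta)\le u(w_1)+u'(w_1)\Delta$, so it suffices to show $\mathbb{E}[u'(w_1)\Delta(X)]\le 0$. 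Now $u'(w_1)$ is non-decreasing in $X$ and equals the constant $k:=u'(W_0-H^{*}-\bar d)$ for every $X\ge\bar d$; thus $u'(w_1)\le k$ everywhere and $u'(w_1)=k$ on $\{X\ge x^{*}\}$. Splitting at $x^{*}$ — where $\Delta\ge 0$ to the left and $\Delta\le 0$ to the right — gives $u'(w_1)\Delta(X)\le k\,\Delta(X)$ on both pieces, hence $\mathbb{E}[u'(w_1)\Delta(X)]\le k\,\mathbb{E}[\Delta(X)]\le 0$, strictly whenever $\bar d>d^{*}$. This contradicts the optimality of $Y^{*}$, so every optimal contract has the form $Y(x)=\max\{0,x-d\}$.

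It remains to exclude $d=0$, i.e.\ full insurance. A marginal perturbation settles this: moving from full insurance to a straight deductible $\varepsilon>0$ removes coverage on $\{0<X\le\varepsilon\}$ but also spares the fixed cost $c_0$ on exactly those states, and differentiating $\varepsilon\mapsto\mathbb{E}[u(W_0-H_\varepsilon-\min(X,\varepsilon))]$ at $\varepsilon=0^{+}$ leaves $u'(W_0-H_0)\big(\rho\,\mathbb{P}[X>0]+(1+\rho)c_0 f_X(0^{+})\big)>0$, so full insurance is not optimal and $d>0$. I expect the single sign-change estimate of the third paragraph to be the main obstacle: since the comparison deductible is matched on price rather than on coverage, $\Delta$ has strictly negative mean, and one must use both the single crossing of $\Delta$ and the flatness of $u'(w_1)$ for $X\ge\bar d$ to still obtain $\mathbb{E}[u'(w_1)\Delta]\le 0$; by contrast, the $d>0$ step is routine once the loss density is positive near the origin.
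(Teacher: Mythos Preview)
Your argument is correct and takes a genuinely different route from the paper's. The paper argues locally, via a first-order condition: writing the optimal contract in the generalized-deductible form $Y(x)=\max\{0,\alpha(x)x-d\}$ of Section~2.1.1, it asserts (without detailed justification) that because the marginal cost $c'$ vanishes on the coverage region $\mathcal{M}=\{Y>0\}$, the unconstrained optimum satisfies $\partial Y/\partial x\geq 1$ there; combining this with the binding slope constraint $Y'\leq 1+\beta=1$ forces $Y'\equiv 1$ on $\mathcal{M}$, and the problem collapses to a one-dimensional search over $d$. Your route is a global comparison: for any feasible $Y^{*}$ with deductible point $d^{*}$ you build the straight deductible $Y_{\bar d}$ of \emph{equal premium} and show it dominates by combining the concavity bound $u(w_1+\Delta)\le u(w_1)+u'(w_1)\Delta$ with a single-crossing estimate on $\Delta=R_{\bar d}-R^{*}$ and the flatness of $u'(w_1)$ for $X\geq\bar d$. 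The paper's proof is shorter but leans on an unproved variational claim and never actually establishes $d>0$; your argument is self-contained, supplies the $d>0$ step, and makes transparent exactly where the fixed cost enters, namely through $\mathbb{E}[\Delta]=-c_0\,\mathbb{P}[d^{*}<X\leq\bar d]<0$. One small caveat: your $d>0$ computation invokes $f_X(0^{+})$, which the paper does not explicitly assume exists; when $\rho>0$ this is harmless since the $\rho\,\mathbb{P}[X>0]$ term already gives strict positivity, but for $\rho=0$ you would need either positive density near the origin or a higher-order perturbation.
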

\begin{proof}
Notice that the insured files a reclamation if, and only if, $Y(x)>0$, so we can partition the interval $[0,M]$ in two regions $\mathcal{M},\mathcal{M}^C$ such that
\begin{align*}
    &\mathcal{M}=\{x\in [0,M]: Y(x)>0\}=\{x: \text{ the insured files a claim}\}\\
    &\mathcal{M}^C=\{x\in [0,M]: Y(x)=0\}=\{x: \text{ the insured does not files a claim}\}.
\end{align*}
Recall that our optimal contract takes the form of a generalized deductible
\begin{align*}
    Y(x)=\max\{0, \alpha(x)x-d\}.
\end{align*}
Since the marginal cost $c'$ is null it is easily verified that \begin{align}
    \left. \frac{\partial Y(x)}{\partial x}\right\vert_{x\in \mathcal{M}} \geq 1.
\end{align}
However, since $\beta =0$ constraint \eqref{S2} implies that $Y'\leq 1$ and thus the previous is an equality, i.e. the optimal contract provides full marginal insurance on $\mathcal{M}$. We can thus set $\alpha (x)=1$ and rewrite problem \eqref{Problem S} as
\begin{align}
\max_{d\geq 0} & \int u(W_0 - H - X(s) +\max\{0, X(s)-d\})d\mathbb{P} \tag{Problem FC \& A} \label{Problem FC A} \\  
s.t.\quad  & c_0 \mathbb{P}[X\geq d] + \mathbb{E}[X-d\vert X\geq d] = (1+\rho)^{-1}H  \nonumber.
\end{align} \end{proof}


We want to show that the possibility of arson-type risks hurts the insured, i.e that Bob would like to commit to leave his bike intact because he would be offered a better contract. Formally, we will show that absent the incentive compatibility constraint, a contract with an upward jump would improves upon a straight deductible. Let us consider a contract of the form
\begin{align*}
    Y_{Ret}(x;t,j)=\begin{cases} 0 &\text{ if } x<t \\ x-j &\text{ if } x\geq t \end{cases}
\end{align*}
for parameter $t\geq 0$ being a threshold, $j\geq 0$ being a loss retention parameter and $t\geq j$. The number $t-j$ is the magnitude of the jump of $Y_{Ret}$ at $t$ and is interpreted as the minimum amount of money that the insured receives provided he files a claim. We will refer to contracts with the above form as \textit{contracts with constant retention}. Clearly, $Y_{Ret}$ is not manipulation-proof when $t>j$ and $Y_{Ret}$ is a simple deductible when $t=j$. More importantly, notice that the insured files a reclamation if, and only if, $Y_{Ret}(x)>0$, so we can again partition the interval $[0,M]$ in the two regions $\mathcal{M}$ and $\mathcal{M}^C$ defined above.\\

\begin{figure}[ht!]
\caption{Straight deductibles and constant retention contracts}
\begin{subfigure}{0.5\textwidth}
    \begin{tikzpicture}
        \draw[<->] (0,5) -- (0,0) -- (5,0);
        \draw[blue,thick](0,0) -- (2,0) -- (4,2); 
        \draw[gray,dotted] (0,0) -- (4,4);
        \node[below] at (5,0) {$X$};
        \node[left] at (0,5) {$Y$};
        \node[below] at (2,0) {$d$};
    \end{tikzpicture}  
    \caption{Straight deductible}
    \label{figure2:sub1}
\end{subfigure}
\begin{subfigure}{0.5\textwidth}
    \begin{tikzpicture}
        \draw[<->] (0,5) -- (0,0) -- (5,0);
        \draw[blue,thick](0,0) -- (2,0);
        \draw[blue,thick](2,1) -- (4,3);
        \draw[gray,dotted] (0,0) -- (4,4);
        \draw[green,red] (2,0)--(2,1);
        \node[red, right] at (2,0.5){$t-j$};
        \node[below] at (5,0) {$X$};
        \node[left] at (0,5) {$Y$};
        \node[below] at (2,0) {$t$};
    \end{tikzpicture}  
    \caption{Constant retention contract}
    \label{figure2:sub2}
\end{subfigure}
\end{figure}
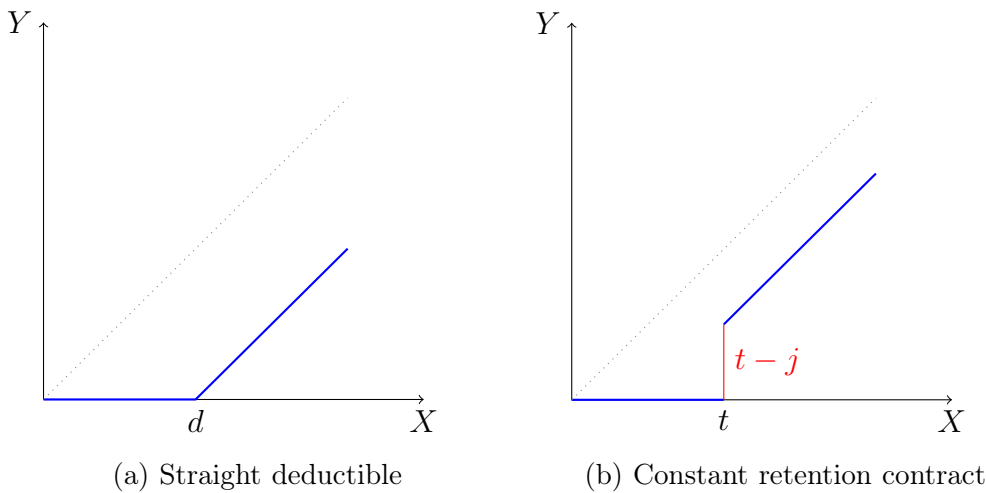
Suppose now that the cost $c$ satisfies Assumption 2, that the insured cannot use arson-type actions so that there are no incentive compatibility constraints and that we are restricting our search to contracts with constant retention. A few calculations shows that we can write the optimisation problem as 
\begin{align}
\max_{t\geq 0, j\geq 0} & \int u(W_0 - H - X(s) +Y_{Ret}(X(s);t,j)d\mathbb{P} \tag{Problem FC \& No-A} \label{Problem FC No-A} \\  
s.t.\quad  & c_0 \mathbb{P}[X\geq t] + \mathbb{E}[X-j\vert X\geq t] = (1+\rho)^{-1}H  \nonumber\\
& t-j\geq 0 \tag{Positive jump} \label{positive jump}.
\end{align}
Clearly, the optimisation problem with arson-type risks \eqref{Problem FC A} is \eqref{Problem FC No-A} when $t=j$. It is thus clear that arson-type risks can hurt the insured. We are left to show that arson-type risks always hurt the insured when the fixed cost is strictly positive. This entails to show that the constraint \eqref{positive jump} of \eqref{Problem FC No-A} is an equality if, and only, $c_0>0$. But this is a well-known result, the statement being essentially Theorem 2 of \citet{gollier1987}.

\begin{lemma}
Consider \eqref{Problem FC No-A}. It holds that $t=j$ if, and only if, $c_0=0$.
\end{lemma}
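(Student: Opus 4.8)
The plan is to analyze the first-order condition of \eqref{Problem FC No-A} with respect to the retention parameter $j$, holding the threshold $t$ fixed, and show that the shadow price structure forces $t = j$ precisely when $c_0 = 0$. First I would note that the premium constraint lets us write $H$ as a function of $(t,j)$ via $H(t,j) = (1+\rho)\bigl(c_0 \mathbb{P}[X \geq t] + \mathbb{E}[(X-j)\mathbf{1}_{X \geq t}]\bigr)$, so that $\partial H / \partial j = -(1+\rho)\mathbb{P}[X \geq t]$. The insured's final wealth in state $s$ is $W_0 - H(t,j) - X(s)$ if $X(s) < t$ and $W_0 - H(t,j) - j$ if $X(s) \geq t$; thus increasing $j$ by $d j$ decreases wealth by $\partial H/\partial j \cdot d j$ uniformly on $\{X < t\}$ and by $(\partial H/\partial j + 1)\, d j$ on $\{X \geq t\}$.

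Next I would compute the derivative of the objective in $j$ and recognize it as proportional to a covariance-type expression. Writing $\lambda = (1+\rho)\mathbb{P}[X \geq t] > 0$ for the marginal premium cost of raising $j$, the first-order effect on expected utility is
\begin{align*}
\frac{\partial}{\partial j}\mathbb{E}[u] = \lambda\, \mathbb{E}\bigl[u'(\text{wealth})\bigr] - \mathbb{E}\bigl[u'(\text{wealth})\,\mathbf{1}_{X \geq t}\bigr],
\end{align*}
where the first term is the uniform wealth loss from the higher premium and the second is the offsetting gain on the claim region. Since wealth is constant ($=W_0 - H - j$) on $\{X \geq t\}$ and strictly larger there than the smallest wealth on $\{X < t\}$ only when $t - j$ is large, the sign of this derivative is governed by how the ratio $\mathbb{P}[X \geq t]$ compares to a $u'$-weighted version of the same event. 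When $c_0 = 0$, one checks $\lambda = (1+\rho)\mathbb{P}[X\ge t]$ and the stationarity condition in $t$ combined with that in $j$ can only be met at an interior optimum with no jump, i.e. $t = j$; when $c_0 > 0$, the fixed cost makes raising $t$ (shrinking the claim region) attractive while $j$ stays put, so the optimum sits at the boundary $t > j$ with \eqref{positive jump} binding — exactly the content of Theorem 2 of \citet{gollier1987}.

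Concretely I would close the argument by invoking that theorem: \eqref{Problem FC No-A} is, after the substitution above, a special case of Gollier's deductible-with-fixed-cost problem, where the "jump" $t - j$ plays the role of his coverage discontinuity. His Theorem 2 states the optimal discontinuity is strictly positive iff the per-claim fixed cost is strictly positive, which is precisely "$t = j \iff c_0 = 0$." For the forward direction ($c_0 = 0 \Rightarrow t = j$) I would give the short direct argument: with $c_0 = 0$ the premium depends on $(t,j)$ only through $\mathbb{E}[(X-j)\mathbf{1}_{X\geq t}]$, and for any candidate with $t > j$ one can lower $t$ toward $j$ while simultaneously adjusting to keep the premium fixed, strictly raising expected utility by a standard Jensen/mean-preserving-contraction argument on the wealth distribution (moving mass away from the low-wealth states $X$ slightly above the old $t$). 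For the converse ($c_0 > 0 \Rightarrow t > j$), the fixed-cost saving from excluding marginal small claims is first-order while the utility cost of the induced jump is second-order near $t = j$, so $t = j$ cannot be optimal.

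The main obstacle I expect is handling the non-smoothness and the joint optimization cleanly: the objective is not differentiable in $t$ at points where $X$ has an atom and the constraint couples $t$ and $j$, so rather than brute-forcing Lagrange conditions I would lean on the reduction to Gollier's framework and only supplement it with the two short perturbation arguments above for self-containedness. Verifying that our parametrization genuinely matches the hypotheses of Theorem 2 of \citet{gollier1987} — in particular that $X$'s continuity with an atom at $0$ is compatible with his setup — is the one technical point that needs care.
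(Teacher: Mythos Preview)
Your approach is essentially the paper's: both directions are handled by perturbation/improvement arguments anchored to Gollier's Theorem~2, with the forward direction ($c_0=0\Rightarrow t=j$) reducing to Arrow--Borch--Raviv (the paper invokes it directly; you re-derive it via a mean-preserving-contraction argument) and the converse showing that the best straight deductible can be strictly improved by introducing a jump when $c_0>0$ (the paper lowers $j$ while keeping $t=d^*$; you instead raise $t$ while keeping $j$ --- both perturbations yield the same first-order gain $(1+\rho)c_0 f(d^*)\mathbb{E}[u']$ at the constrained optimum, which is why either works). Two small slips to clean up before writing it out: when $c_0>0$ the constraint \eqref{positive jump} is \emph{slack} at the optimum, not ``binding'' as you wrote; and in your contraction sketch the low-wealth states are those with $X$ just \emph{below} $t$ (wealth $\approx W_0-H-t$), not above, since wealth jumps \emph{up} to $W_0-H-j$ at the threshold.
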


The proof does not provide intuition and is thus omitted from the text. The interpretation is straightforward. Absent arson-type risks, the optimal insurance contract provides the maximum possible protection while eliminating nuisance claims, the small claims which are more costly to administer than the coverage they offer. This is achieved by refusing to reimburse small claims while offering a generous protection conditional on the loss being large enough. This creates a spread in coverage, the discontinuity at the threshold. The problem is that Bob would like to exploit this spread by augmenting the damage to his bike. In equilibrium, Bob is never offered such contract with high protection, and Bob would be strictly better off if he could commit not to buy a mace.

\subsubsection{Continuous costs and the sub-optimality of disappearing deductibles}
We now consider the case when $c$ is a continuous function. 
\begin{assumption}
The cost function $c$ is continuous, weakly positive, non-decreasing, twice-differentiable and satisfies $c(0)=0$.
\end{assumption}
Recall that our problem is now almost identical to the one of \citet{spaeter1997design}.
\begin{lemma} Under Assumption 3, if $Y$ solves the reduced problem
\begin{align}
\max_{H \geq 0, Y \in B_+(\mathcal{B}([0,M]))} & \int u(W_0 - H - X(s) +Y(X(s))d\mathbb{P} \tag{Reduced problem} \label{reduced problem}\\
  s.t.\, & \,0 \leq Y \leq X \nonumber \\
       & (1+\rho)\int Y(X(s))+c(Y(X(s))) d \mathbb{P} = H  \nonumber
\end{align}
of \citet{spaeter1997design} and $Y$ satisfies constraint \eqref{S2}
then $Y$ solves \eqref{Problem S}.
\end{lemma}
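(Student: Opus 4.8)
\section*{Proof plan for the Lemma}

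The plan is to argue by \emph{relaxation}. Problem \eqref{Problem S} is nothing but the \eqref{reduced problem} of \citet{spaeter1997design} with the single extra requirement \eqref{S2} imposed, and with the same objective functional; hence the feasible set of \eqref{Problem S} is contained in that of the reduced problem. A maximiser of the less-constrained problem that happens to remain feasible for the more-constrained one is automatically a maximiser of the latter, and this is exactly the situation described in the hypothesis, so there is essentially nothing to do beyond making this observation precise.

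Concretely, let $J(H,Y)=\int u(W_0-H-X(s)+Y(X(s)))\,d\mathbb{P}$, let $\mathcal{F}_R$ be the set of pairs $(H,Y)$ with $H\geq 0$ and $Y\in B_+(\mathcal{B}([0,M]))$ satisfying $0\leq Y\leq X$ together with the equality participation constraint $(1+\rho)\int Y(X(s))+c(Y(X(s)))\,d\mathbb{P}=H$, and let $\mathcal{F}_S=\mathcal{F}_R\cap\{(H,Y):slope(Y)\leq 1+\beta\}$ be the feasible set of \eqref{Problem S}. Since $\mathcal{F}_S\subseteq\mathcal{F}_R$ and $J$ is the objective of both problems, $\sup_{\mathcal{F}_S}J\leq\sup_{\mathcal{F}_R}J$. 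Now take $Y$ solving \eqref{reduced problem}, with associated premium $H$ uniquely determined by the (binding) participation constraint, so $J(H,Y)=\sup_{\mathcal{F}_R}J$. By hypothesis $Y$ also satisfies \eqref{S2}, hence $(H,Y)\in\mathcal{F}_S$, and therefore
$$
\sup_{\mathcal{F}_S}J\;\geq\;J(H,Y)\;=\;\sup_{\mathcal{F}_R}J\;\geq\;\sup_{\mathcal{F}_S}J .
$$
All the inequalities collapse to equalities, so $(H,Y)$ is feasible for \eqref{Problem S} and attains its value, i.e. $Y$ solves \eqref{Problem S}.

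The only step requiring any care — the ``main obstacle'', such as it is — is checking that $(H,Y)$ meets the regularity requirement tacitly attached to \eqref{Problem S}, namely $Y\in C^0[0,M]$ and $Y$ a.e. differentiable, so that passing to \eqref{Problem S} loses no generality. But by Assumption 1 the contract $Y$ is non-decreasing, and \eqref{S2} gives $0\leq\frac{Y(x')-Y(x)}{x'-x}\leq 1+\beta$ for all $0\leq x<x'\leq M$; thus $Y$ is Lipschitz with constant $1+\beta$, in particular continuous, and a.e. differentiable by Rademacher's Theorem (Theorem 3 of \citet{Lauzier2019positioningmaths}). Hence the class of continuous, a.e. differentiable contracts over which \eqref{Problem S} is posed already contains every $\mathcal{F}_S$-feasible point, and the conclusion stands. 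Note that Assumption 3 enters only through the results of \citet{spaeter1997design} guaranteeing that \eqref{reduced problem} is well posed; it plays no further role in the argument above.
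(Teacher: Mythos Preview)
Your argument is correct and is exactly the natural one: \eqref{Problem S} is the \eqref{reduced problem} with the single additional constraint \eqref{S2}, the objective is common, so a maximiser of the relaxed problem that happens to satisfy \eqref{S2} is automatically a maximiser of the constrained one. The paper itself does not supply a proof of this lemma; it is stated and immediately used, the authors treating it as self-evident. Your relaxation argument is the standard way to make this precise, and your extra paragraph checking that the implicit regularity requirement $Y\in C^0[0,M]$, a.e.\ differentiable, is automatically met under \eqref{S2} via Lipschitz continuity and Rademacher's theorem is a nice touch that the paper leaves unsaid.
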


This lemma informs us that the only problematic case which must be handled is when the
contract $Y$ found in \citet{spaeter1997design} does not satisfy constraint
\eqref{S2} somewhere. Intuitively, it seems natural to attempt flattening $Y$ sufficiently to 
satisfy $slope(Y)\leq 1+\beta$ thus solving \ref{Problem S}. This approach is sometimes fruitful but does not work in certain cases as we explain later. We consider the easiest case when the best contract absent arson-type risks is a \textit{completely} disappearing deductible.\footnote{\citet{huberman1983optimal} is the first to notice that the optimality of disappearing deductibles is no longer true when the insured has access to arson-type actions.} Formally, we say that contract $Y$ is a completely disappearing deductible if there exists a realisation $x'\in[0,M]$ such that for every $x\geq x'$ it is $Y(x)=x$.\\

\begin{proposition}
If $\beta=0$, $Y_R$ solves \eqref{reduced problem} and $Y_R$ is a
  completely disappearing deductible then the unique solution to \eqref{Problem I} is a straight deductible.
\end{proposition}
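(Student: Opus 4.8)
The plan is to combine Proposition 6 (which already handles the case $\beta=0$ under Assumption 2 and gives a straight deductible) with the structural facts forced by Corollary 4 when $\beta=0$, namely $\mathrm{slope}(Y)\leq 1$ and the no-sabotage condition. First I would set $\beta=0$ throughout and recall that, by Corollary 4 and the discussion following it, any optimal contract satisfies $\mathrm{slope}(Y)\leq 1$; equivalently the retention $R(x)=x-Y(x)$ is non-decreasing. Since $Y_R$ is a completely disappearing deductible, there is a realisation $x'$ with $Y_R(x)=x$ (hence $R(x)=0$) for all $x\geq x'$, so $Y_R$ violates \eqref{S2} precisely where it has slope strictly above $1$ just before $x'$ — that is, $Y_R$ is infeasible for \eqref{Problem S} unless it is already a straight deductible, which we assume it is not. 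The goal is to show the constrained optimum of \eqref{Problem S} must then be a straight deductible.

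The key step is a pointwise optimality argument. I would start from the reduced-problem characterisation in \citet{spaeter1997design}: there is a region where the marginal coverage condition dictates that, absent the slope constraint, the optimal contract provides marginal coverage $Y'=1$ (full marginal insurance) once the loss is above some threshold, together with possibly a disappearing-deductible regime where $Y'>1$ that bridges a no-coverage zone to the full-coverage zone. When $\beta=0$ constraint \eqref{S2} caps $Y'\leq 1$, so the bridging regime with $Y'>1$ is infeasible and must be replaced. I would then argue that among all feasible contracts with $Y'\leq 1$, the participation constraint \eqref{S3} being binding, the best is to set $Y'=1$ wherever $Y>0$ and $Y=0$ otherwise — i.e. a straight deductible $Y(x)=\max\{0,x-d\}$ — because providing marginal coverage strictly below $1$ on any subset where a claim is filed is strictly dominated (by the same concavity/first-order logic used in Proposition 6, equation (2)): it transfers wealth to better states at the cost of worse states, and re-optimising $d$ to keep \eqref{S3} binding yields a strict improvement. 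This reduces \eqref{Problem S} to a one-parameter problem over $d$ exactly as in \eqref{Problem FC A}, whose solution is the claimed straight deductible; uniqueness follows from strict concavity of $u$ and strict monotonicity of the map $d\mapsto$ (actuarially fair premium).

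Finally I would address why the statement says "the unique solution to \eqref{Problem I}" rather than to \eqref{Problem S}: by Theorem 3 every optimal contract of \eqref{Problem I} is manipulation-proof, hence feasible for \eqref{Problem S}, and conversely a manipulation-proof contract delivers in \eqref{Problem I} the same state-by-state reimbursement it delivers in \eqref{Problem S}; so the two problems have the same optimal value and the same optimisers, and the straight deductible found above is the unique one.

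The main obstacle I anticipate is making the "$Y'>1$ regime is forced somewhere in $Y_R$" claim precise: one must use the structure of the \citet{spaeter1997design} solution to rule out that $Y_R$ could be a straight deductible already (excluded by hypothesis, since a straight deductible is not a \emph{completely} disappearing deductible in the relevant non-trivial sense) and to guarantee that the only binding obstruction to feasibility is the slope cap, so that flattening to $Y'=1$ on the coverage region is both necessary and, after re-optimising $d$, optimal. Handling the boundary behaviour at $x'$ and at $M$ carefully — e.g. ensuring no mass of probability is mishandled where $Y_R(x)=x$ — is the fiddly part, but it is a routine continuity-and-monotonicity argument once the pointwise dominance step is in place.
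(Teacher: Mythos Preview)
Your proposal is correct and follows essentially the same approach as the paper: the hypothesis that $Y_R$ is a completely disappearing deductible signals that the unconstrained optimum wants maximal marginal coverage, the cap $\mathrm{slope}(Y)\leq 1$ from $\beta=0$ then forces $Y'=1$ on the coverage region $\mathcal{M}$, and this reduces \eqref{Problem S} to the one-parameter problem in $d$. The paper's own proof is much terser---it simply asserts that $\left.\mathrm{slope}(Y)\right\vert_{\mathcal{M}}=1$ and writes down the resulting optimisation over $d$---so your extra care about the pointwise dominance step, the link between \eqref{Problem I} and \eqref{Problem S}, and the boundary behaviour is filling in details the paper leaves implicit rather than taking a different route.
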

\begin{proof}
The fact that $Y_R$ is a completely disappearing deductible informs us that our optimal contract $Y$ must provide as much coverage as possible. Since $\beta=0$ it is $$\left. slope(Y)\right\vert_\mathcal{M} =1,$$
for $\mathcal{M}$ being as before.  Hence, finding $Y$ is equivalent to solving
\begin{align*}
\max_{d\geq 0} & \int u(W_0 - H - X(s) +\max\{0, X(s)-d\})d\mathbb{P} \\  
s.t.\quad &(1+\rho)\mathbb{E}[\max\{0, X-d\} +c(\max\{0, X-d\})] = H  \nonumber.
\end{align*} 
\end{proof}

Similarly as before, Bob's ability to take a sledgehammer to its bicycles is 
priced in the insurance contract, and Bob will never be offered
a completely disappearing deductible in equilibrium. Again, though Bob would like to purchase such contract, he cannot. This is because no Bob
can commit to being honest. The intuition that the contract solving problem \eqref{Problem S} is
simply a "flattening" of the solution to \eqref{reduced problem} is misleading.
By additivity of the Lebesgue integral, the insurer's participation constraint
states that the insurer should recoup the cost on average and not state-by-state. This, fortunately, gives us some leeway in solving
\eqref{Problem S}. However, this also means that there are some cases
where we have to roll-up our sleeves and directly attack the problem.

\section{Discussion and conclusion}
    We showed that the optimal contract must be robust to arson-type risks and provided a general characterization of the contract. We now explain why this result is remarkably general and why we believe both risk theorists and practitioners alike should be interested in it. We focus on risk practitioners, whom might want to consider more complex situations than what is envisioned in the text. The main message of the paper is not to worry. The heuristic of considering only contracts mixing deductibles, coinsurance and upper-limits is well-founded when there are arson-type actions, and the risk professionals might want to focus their attention on controlling other risks. A simple example is a car insurer that uses coinsurance clause to prevent reckless driving (ex-ante moral hazard, see below). As coinsurance contracts are manipulation-proof with regard to arson-type actions, the car insurer should not worry about arson-type risks and can thus focus on pricing effectively reckless driving. We conclude with a discussion of the limitations of the model that are relevant to the practitioners and with avenues for further research.

\subsection*{Non-expected utility decision criteria}
A careful inspection of the proof of manipulation-proofness (Theorem 3) reveals two important facts. First, notice that we never used any properties of the underlying probability space beside that (i) the random variable representing the risk faced by the insured had full support and that (ii) the two integrals in \eqref{Problem I} are finite. In layman's terms, we only used the fact that when Bob has an accident (i) the damage to his bike can fully range from small scratches to a total loss and (ii) that Bob's risk is not infinitely large. This is important because we could have considered other decision criteria without changing anything to the veracity of Theorem 3. For instance, replacing the probability $\mathbb{P}$ with a capacity $\nu$ in one integral and integrating in the sense of Choquet instead of Lebesgue would have changed nothing, provided that the set of solutions to \eqref{Problem I} remains non-empty and another technical condition is satisfied. This is particularly important for the risk practitioners because it implies that one does not want to sell insurance contracts vulnerable to arson-type actions no matter how difficult it is to evaluate the fundamentals of the risk to be bear.\footnote{The technical condition is that the risk to insure must still have full support under the probabilistic belief of the decision-maker with the non-expected utility criterion. The optimal contract might not be manipulation-proof otherwise. This happens when a decision-maker believes that the set of realizations for which a contract induces manipulation is a set of measure zero.}\footnote{Non-expected utility decision criterion can be interpreted as situations where the decision-maker has to statistically infer the risks. This extension does not seem to change the fundamental message of the article. In particular, it seems that one can keep invoking arson-type actions to justify the no-sabotage condition.}

\subsection*{Ex-ante moral hazard in loss reduction}

The most well-known and studied agency problem is the classical Principal-Agent problem with hidden actions, or ex-ante moral hazard.\footnote{We refer to \citet{winter2013optimal} for an introduction to the Principal-Agent problem in the context of insurance design.} In the context of insurance, the hidden action is often interpreted as unobservable preventive measures that the insured takes to reduces its risk, for instance by driving carefully. There does not seem to be a trade-off in the provision of incentives to mitigate ex-ante moral hazard and to prevent arson-type actions. This is because the optimal contract with only ex-ante moral hazard (often) satisfies the no-sabotage condition and his therefore manipulation-proof with regards to arson-type actions. Precisely, the contract satisfies the no-sabotage condition when the \textit{first-order approach} is valid, a condition most often assumed in the literature.

\subsection*{Limitations and avenues for further research}

The result that optimal contract is manipulation-proof seems to be in contradiction with the real-world, where arson and insurance fraud does happen. It is not. This apparent contradiction comes from the interplay of three important assumptions:  we defined arson-type actions as (i) the ability to physically destroy an object while (ii) having no chance of being caught and while (iii) assuming implicitly that the insured faces no other risks.\\

 While (i) is a natural consideration in the context of Property insurance, it does not capture all the possibilities of insurance fraud. On theoretical grounds, the classical costly state falsification model of \citep{crocker1998honesty} contains an intuitive example where manipulations happen. In their model, the insurer can defraud the contract by declaring greater damages than the real damages, and the insurer cannot verify the claims. The optimal contracts always entail manipulations in equilibrium because the marginal cost of lying is essentially null for small lies. However, the insured's ability to fill dishonest claims depends on the damage's size, and the insurer can still differentiate between small and large fundamental losses. This unravelling implies that a contract with manipulations in equilibrium dominates (their model's) manipulation-proof contracts.\footnote{The topic of fraud has received a great deal of attention in the literature, and our short discussion is far from exhaustive. We refer to \citet{picard2013economic} for an overview.}\\
 
 With (ii), we implicitly assumed that arson-type actions cannot be detected by auditing. This is unlikely to be true in general, as arson-type actions often leave evidences. For instance, fire accelerants like gasoline leave traces that can be detected by forensic investigators. Given that perfectly preventing arson-type actions by using manipulation-proof contracts is costly in term of welfare, one might want to control for these actions by auditing suspect claims instead of implementing manipulation-proof contracts. This opens up an interesting theoretical possibility, where arson-type actions might happen in equilibrium as a result of optimal contracting.\\

There are situations where arson-type risks does not seem to be realistic on practical grounds. The leading example is the field of health insurance, where arson-type actions as we modelled them would require the insured to commit self-injury. However, anecdotes exist where people attempted to defraud an insurance contract by hurting themselves, health insurance contracts routinely contains self-inflicted injury exclusion clauses, and insurance companies routinely investigates injury claims. It is thus not far fetched to think that self-injuries can be modeled using the method developed in this article. Which brings us to (iii), as our manipulation-proof result does not cover the situation where the insured also faces uninsurable (background) risks. This observation also opens up another interesting theoretical possibility, where arson-types actions like self-injuries could arise in equilibrium as a response to large uninsurable losses.
    
\appendix
\section{Omitted proofs}
\noindent \textbf{Proof of Lemma 7}
\begin{proof}
    If $c_0=0$ then \eqref{Problem FC No-A} is the standard Arrow-Borch-Raviv problem for which the solution is a simple deductible, i.e.
    \begin{align*}
        Y(x)=\max\{0, x-d\}
    \end{align*}
so $t=j=d$. Conversely and set $t=j=d^*\geq 0$ such that $d^*$ solve the reduced problem 
\begin{align*}
 \max_{d\geq 0} & \int u(W_0 - H - X(s) +\max\{0, X(s)-d\})d\mathbb{P} \\  
s.t.\quad  & c_0 \mathbb{P}[X\geq d] + \mathbb{E}[X-d\vert X\geq d] = (1+\rho)H  \nonumber.
\end{align*}
If $c_0=0$ then we are done. Suppose $c_0>0$. Then there exists an $\varepsilon>0$ such that the alternative contract
$$Y(x)=\mathbb{I}_{x\geq d^*}[x-d^*+\varepsilon]$$ strictly improves upon the contract
\begin{align*}
        Y(x)=\max\{0, x-d^*\}.
    \end{align*}
    \end{proof}
    
\nocite{*}
\bibliographystyle{ormsv080_modified} 
\bibliography{bibliography.bib}

\end{document}